\renewcommand{\log}{\lg}
\def\idtt#1{\ensuremath{\mathtt{#1}}}
\def\parent{\idtt{parent}}
\def\nodedepth{\idtt{depth}}
\def\treeroot{\idtt{root}}
\def\nodelabel{\idtt{label}}
\def\prevlabel{\idtt{prevlabel}}
\def\lca{\idtt{lca}}
\def\anc{\idtt{anc}}
\def\labelanc{\idtt{labelanc}}
\def\pathcount{\idtt{count}}
\def\preorder{\idtt{preorder}}
\def\postorder{\idtt{postorder}}
\def\subtreesize{\idtt{subtreesize}}
\def\prank{\idtt{prank}}
\def\rank{\idtt{rank}}
\def\access{\idtt{access}}
\def\select{\idtt{select}}
\def\open{\idtt{open}}
\def\close{\idtt{close}}
\def\enclose{\idtt{enclose}}
\def\etal{{\em et al.}}
\title{Tree Path Majority Data Structures}
\author{Travis Gagie}{CeBiB --- Center for Biotechnology and Bioengineering, Chile; 
School of Computer Science and Telecommunications, Diego Portales University, Chile}{travis.gagie@gmail.com}{}{Funded by FONDECYT grant 1171058, Chile.}
\author{Meng He}{Faculty of Computer Science, Dalhousie University, Canada}{mhe@cs.dal.ca}{}{Funded by NSERC, Canada.}
\author{Gonzalo Navarro}{CeBiB --- Center for Biotechnology and Bioengineering, Chile;
IMFD --- Millenium Institute for Foundational Research on Data, Chile;
Department of Computer Science, University of Chile,
Chile}{gnavarro@dcc.uchile.cl}{}{Funded with basal funds FB0001, Conicyt,
Chile, by Millenium Institute for Foundational Research on Data (IMFD), Chile,
and by Fondecyt grant 1170048, Chile.}
\authorrunning{T.~Gagie, M.~He and G.~Navarro}
\subjclass{E.1 Data Structures; E.4 Coding and Information Theory}
\keywords{Majorities on Trees; Succinct data structures}
\begin{document}

\maketitle

\begin{abstract}
We present the first solution to $\tau$-majorities on tree paths. Given a
tree of $n$ nodes, each with a label from $[1..\sigma]$, and a fixed threshold
$0<\tau<1$, such a query gives two nodes $u$ and $v$ and asks for all the
labels that appear more than $\tau \cdot |P_{uv}|$ times in the path 
$P_{uv}$ from $u$ to $v$, where $|P_{uv}|$ denotes the number of nodes in $P_{uv}$.
Note that the answer to any query is of size up to 
$1/\tau$. 
On a $w$-bit RAM,
we obtain a linear-space data structure 
with $O((1/\tau)\log^* n \log\log_w \sigma)$ query time.
For any $\kappa > 1$, we can also build a structure that uses
$O(n\log^{[\kappa]} n)$ space, where $\log^{[\kappa]} n$ denotes the function that applies logarithm $\kappa$ times to $n$,
and answers queries in time $O((1/\tau)\log\log_w \sigma)$. 
The construction time of both structures is $O(n\log n)$.
We also describe two succinct-space solutions with the same query time
of the linear-space structure. One uses $2nH + 4n + o(n)(H+1)$ bits,
where $H \le \lg\sigma$ is the entropy of the label distribution,
and can be built in $O(n\log n)$ time. The other uses $nH + O(n) + o(nH)$ bits
and is built in $O(n\log n)$ time w.h.p.
\end{abstract}

\section{Introduction}
\label{sec:intro}

Finding frequent elements in subsets of a multiset is a fundamental 
operation for data analysis and data mining \cite{fsmmu1998,dlm2002}. When the 
sets have a certain structure, it is possible to preprocess the multiset to 
build data structures that efficiently find the frequent elements in any subset.

The best studied multiset structure is the sequence, where the subsets that
can be queried are ranges (i.e., contiguous subsequences) of the sequence. 
Applications of this case include time sequences, linear-versioned structures, 
and one-dimensional models, for example.
Data structures for finding the {\em mode} (i.e., the most frequent element) 
in a range require time $O(\sqrt{n/\log n})$, and it is unlikely that this can 
be done much better within reasonable extra space \cite{cdlmw14}. Instead, 
listing all the elements whose relative frequency in a range is over some
fraction $\tau$ (called the {\em $\tau$-majorities} of the range)
is feasible within linear space and $O(1/\tau)$ time, which is worst-case
optimal \cite{bgmnn2016}. 
Mode and $\tau$-majority queries on higher-dimensional arrays have also been
studied \cite{ghmn2011,cdlmw14}.

In this paper we focus on finding frequent elements when the subsets that can 
be queried are the labels on paths from one given node to another in a labeled 
tree. For example, given a minimum spanning tree of a graph, we might be 
interested in frequent node types on the path between two nodes. Path mode or 
$\tau$-majority queries on multi-labeled trees could be useful when handling
the tree of versions of a document or a piece of software, or a phylogenetic tree (which is essentially a tree of versions of a genome).  If each node stores a list of the sections (i.e., chapters, modules, genes) on which its version differs from its parent's,
then we can efficiently query which sections are changed most frequently between two given versions.

There has been relatively little work previously on finding frequent elements on tree paths.
Krizanc~\etal~\cite{KMS05} considered path mode queries, obtaining 
$O(\sqrt{n}\log n)$ query time. This was recently improved by Durocher~\etal~\cite{DSST16},
who obtained $O(\sqrt{n/w}\log\log n)$ time on a RAM machine of 
$w=\Omega(\log n)$ bits. Like on the more special case of sequences, these
times are not likely to improve much. No previous work has considered the 
problem of finding path $\tau$-majority queries, which is more tractable
than finding the path mode. This is our focus.

We present the first data structures to support path
$\tau$-majority queries on trees of $n$ nodes, with labels in $[1..\sigma]$, 
on a RAM machine. We first obtain a data structure using $O(n\log n)$ space and
$O((1/\tau)\log\log_w\sigma)$ time (Theorem~\ref{thm:nlogn}). Building on
this result, we reduce the space to $O(n)$ at the price of 
a very slight increase in the query time, $O((1/\tau)\log^*n\log\log_w\sigma)$
(Theorem~\ref{thm:linear}). We then show that the original query time can be
obtained within very slightly superlinear space, $O(n\log^{[\kappa]} n)$ for
any desired $\kappa>1$, where $\log^{[\kappa]} n$ denotes the function that applies logarithm $\kappa$ times to $n$ 
(Theorem~\ref{thm:superlinear}). Finally, we show that our linear-space data
structure can be further compressed, to either $2nH + 4n + o(n)(H+1)$ bits or
$nH + O(n) + o(nH)$ bits, where $H \le \lg \sigma$ is the entropy of the
distribution of the labels in $T$, while retaining the same query times of 
the linear-space data structure (Theorems~\ref{thm:succ1} and \ref{thm:succ2}).
All our structures can be built in $O(n\log n)$ deterministic time; only the
latter one requires that time only w.h.p.  We close with a brief
discussion of directions for future research.  In particular, we describe
how to adapt our results to multi-labeled trees.

Durocher~\etal~\cite{DSST16} also considered
queries that look for the least frequent elements and 
$\tau$-minorities on paths. In the appendix, we show how to compress their 
data structure for $\tau$-minorities with only a very slight increase in query 
time. 

\section{Preliminaries}
\label{sec:pre}

\subsection{Definitions}

We deal with {\em rooted ordinal trees} (or just {\em trees}) $T$. Further,
our trees are {\em labeled}, that is, each node $u$ of $T$ has an
integer label $\nodelabel(u) \in [1..\sigma]$. We assume that, if our main tree
has $n$ nodes, then $\sigma = O(n)$ (we can always remap the labels to a range
of size at most $n$ without altering the semantics of the queries of interest
in this paper).

The {\em path} between nodes $u$ and $v$ in a tree $T$ is the (only) sequence
of nodes $P_{uv} = \langle u=z_1,z_2,\ldots,z_{k-1},z_k=v\rangle$ such that 
there is an edge in $T$ between each pair $z_i$ and $z_{i+1}$, for $1\le i<k$.
The length of the path is $|P_{uv}| = k$, for example the length of the path 
$P_{uu}$ is 1. Any path from $u$ to $v$ goes from $u$ to the lowest common
ancestor of $u$ and $v$, and then from there it goes to $v$ (if $u$ is an
ancestor of $v$ or vice versa, one of these two subpaths is empty).

Given a real number $0 < \tau < 1$, a {\em $\tau$-majority} of the
path $P_{uv}$ is any label that appears (strictly) more than $\tau \cdot
|P_{uv}|$ times among the labels of the nodes in $P_{uv}$. The {\em path
$\tau$-majority problem} is, given $u$ and $v$, list all the
$\tau$-majorities in the path $P_{uv}$. Note that there can be up to
$\lfloor 1/\tau \rfloor$ such $\tau$-majorities.

Our results hold in the RAM model of computation, assuming a computer word of
$w=\Omega(\log n)$ bits, supporting the standard operations.

Our logarithms are to the base 2 by default.
By $\log^{[k]} n$ we mean the function that applies logarithm $k$ times to $n$, i.e., $\log^{[0]} n = n$ and
$\log^{[k]} n = \log (\log^{[k-1]} n)$. By $\log^* n$ we denote the iterated logarithm, i.e., the minimum
$k$ such that $\log^{[k]} n \le 1$.

\subsection{Sequence representations}

A bitvector $B[1..n]$ can be represented within $n+o(n)$ bits so that the
following operations take constant time: $\access(B,i)$ returns $B[i]$,
$\rank_b(B,i)$ returns the number of times bit $b$ appears in $B[1..i]$,
and $\select_b(B,j)$ returns the position of the $j$th occurrence of $b$
in $B$ \cite{Cla96}. If $B$ has $m$ 1s, then it can be represented within
$m\log(n/m)+O(m)$ bits while retaining the same operation times \cite{RRR07}.
Those structures can be built in linear time. Note the space is $o(n)$ bits
if $m=o(n)$.

Analogous operations are defined on sequences $S[1..n]$ over alphabets
$[1..\sigma]$. For example, one can represent $S$ within $nH + o(n)(H+1)$
bits, where $H\le\lg\sigma$ is the entropy of the distribution of symbols in 
$S$, so that $\rank$ takes time $O(\log\log_w \sigma)$, $\access$ takes time
$O(1)$, and $\select$ takes any time in $\omega(1)$ \cite[Thm.~8]{BN15}. The
construction takes linear time. While this $\rank$ time is optimal,
we can answer {\em partial rank} queries in $O(1)$ time, 
$\prank(S,i)=\rank_{S[i]}(S,i)$,
by adding $O(n(1+\log H))$ bits on top of a representation giving
constant-time $\access$ \cite[Sec.~3]{BN13}. This construction requires linear
randomized time.

\subsection{Range $\tau$-majorities on sequences}
\label{sec:range}

A special version of the path $\tau$-majority queries on trees
is range $\tau$-majority queries on sequences $S[1..n]$, which are
much better studied. Given $i$ and $j$, the problem is to return all the distinct
symbols that appear more than $\tau \cdot (j-i+1)$ times in $S[i..j]$.
The most recent result on this problem \cite{BGN13,bgmnn2016} is a linear-space
data structure, built in $O(n\log n)$ time, that answers queries in the 
worst-case optimal time, $O(1/\tau)$.

For our succinct representations, we also use a data structure
\cite[Thm.~6]{bgmnn2016} that requires $nH + o(n)(H+1)$ bits, and can answer
range $\tau$-majority queries in any time in $(1/\tau)\cdot\omega(1)$. The 
structure is built on the sequence representation mentioned above 
\cite[Thm.~8]{BN15}, and thus it includes its support for $\access$, $\rank$, 
and $\select$ queries on the sequence. To obtain the given times for
$\tau$-majorities, the structure includes the support for partial rank queries
\cite[Sec.~3]{BN13}, and therefore its construction time is randomized. In this
paper, however, it will be sufficient to obtain $O((1/\tau)\log\log_w\sigma)$
time, and therefore we can replace their $\prank$ queries by general $\rank$
operations. These take time $O(\log\log_w\sigma)$ instead of $O(1)$, but can be
built in linear time.\footnote{In fact, their structure \cite{bgmnn2016} can
be considerably simplified if one can spend the time of a general $\rank$
query per returned majority.}
Therefore, this slightly slower structure can also be built in $O(n\log n)$ 
deterministic time.

When a set has no structure, we can find its $\tau$-majorities in linear time.
Misra and Gries~\cite{mg1982} proposed an optimal solution that computes all $\tau$-majorities using $O(n \lg (1/\tau))$ comparisons. When implemented on a word RAM over an integer alphabet of size $\sigma$, the running time becomes 
$O(n)$~\cite{dlm2002}.

\subsection{Tree operations}
\label{sec:treeops}

For tree nodes $u$ and $v$, we define the operations 
$\treeroot$ (the tree root), 
$\parent(u)$ (the parent of node $u$), 
$\nodedepth(u)$ (the depth of node $u$, 0 being the depth of the root),
$\preorder(u)$ (the rank of $u$ in a preorder traversal of $T$),
$\postorder(u)$ (the rank of $u$ in a postorder traversal of $T$),
$\subtreesize(u)$ (the number of nodes descending from $u$, including $u$),
$\anc(u,d)$ (the ancestor of $u$ at depth $d$), and
$\lca(u,v)$ (the lowest common ancestor of $u$ and $v$).
All those operations can be supported in constant time and linear space on a 
static tree after a linear-time preprocessing, trivially with 
the exceptions of $\anc$ \cite{BF04} and $\lca$ \cite{BFCPSS05}.

A less classical query is $\labelanc(u,\ell)$, which returns the nearest
ancestor of $u$ (possibly $u$ itself) labeled $\ell$ (note that the label of 
$u$ needs not be $\ell$). If $u$ has no ancestor labeled $\ell$,
$\labelanc(u,\ell)$ returns $null$. This operation can be solved in time
$O(\log\log_w\sigma)$ using linear space and preprocessing time
\cite{HMZ14,Tsu14,DSST16}.

\subsection{Succinct tree representations}

A tree $T$ of $n$ nodes can be
represented as a sequence $P[1..2n]$ of parentheses (i.e., a bit sequence).
In particular, we consider the balanced parentheses representation, where we
traverse $T$ in depth-first order, writing an opening parenthesis when 
reaching a node and a closing one when leaving its subtree. 
A node is identified with
the position $P[i]$ of its opening parenthesis. By using $2n+o(n)$ bits,
all the tree operations defined in Section~\ref{sec:treeops} (except those on labels) can be supported in
constant time \cite{NS14}.

This representation also supports $\access$, $\rank$ and $\select$ on the
bitvector of parentheses, and the operations 
$\close(P,i)$ (the position of the parenthesis closing the one
that opens at $P[i]$),
$\open(P,i)$ (the position of the parenthesis opening the one
that closes at $P[i]$), and
$\enclose(P,i)$ (the position of the rightmost opening parenthesis whose corresponding parenthesis pair encloses 
$P[i]$; when $P$ represents a tree, this parenthesis represents the parent of the node that $P[i]$ corresponds to).

Labeled trees can be represented within $nH + 2n + o(n)(H+1)$ bits by adding
the sequence $S[1..n]$ of the node labels in preorder, so that $\nodelabel(i)
= \access(S,\preorder(i))$.

\section{An $O(n\log n)$-Space Solution}
\label{sec:nlogn}

In this section we design a data structure answering path $\tau$-majority 
queries on a tree of $n$ nodes using $O(n\log n)$ space and 
$O((1/\tau)\log\log_w\sigma)$ time. This is the basis to obtain our final 
results.

We start by {\em marking} $O(\tau n)$ tree nodes, in a way that any node
has a marked ancestor at distance $O(1/\tau)$. A simple way to obtain
these bounds is to mark every node whose height is $\ge \lceil 1/\tau\rceil$
and whose depth is a multiple of $\lceil 1/\tau \rceil$.
Therefore, every marked node is the nearest marked ancestor of 
at least $\lceil 1/\tau\rceil -1$ distinct non-marked nodes, which 
guarantees that there are $\le \tau n$ marked nodes. On the other hand, any
node is at distance at most $2\lceil 1/\tau\rceil-1$ from its nearest marked 
ancestor.

For each marked node $x$, we will consider prefixes $P_i(x)$ of the labels in
the path from $x$ to the root, of length $1+2^i$, that is, 
$$P_i(x)=\langle \nodelabel(x),\nodelabel(\parent(x)),\nodelabel(\parent^2(x)), 
\ldots, \nodelabel(\parent^{2^i}(x)) \rangle$$
(terminating the sequence at the root if we reach it). For each 
$0 \le i \le \lceil\lg\nodedepth(x)\rceil$, we
store $C_i(x)$, the set of $(\tau/2)$-majorities in $P_i(x)$.
Note that $|C_i(x)| \le 2/\tau$ for any $x$ and $i$.

By successive applications of the next lemma we have that, to find all the 
$\tau$-majorities in the path from $u$ to $v$, we can partition the path 
into several subpaths and then consider just the $\tau$-majorities in each
subpath.

\begin{lemma} \label{lem:split}
Let $u$ and $v$ be two tree nodes, and let $z$ be an intermediate node in the
path. Then, a $\tau$-majority in the path from $u$ to $v$ is a
$\tau$-majority in the path from $u$ to $z$ (including $z$) or a
$\tau$-majority in the path from $z$ to $v$ (excluding $z$), or in both.
\end{lemma}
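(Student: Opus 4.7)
The plan is to prove this by a simple contrapositive argument based on counting. The crucial observation that makes the statement clean is the disjointness of the two subpaths: since $z$ is included in the path from $u$ to $z$ but excluded from the path from $z$ to $v$, the two subpaths partition the nodes of $P_{uv}$, so their lengths add exactly: $|P_{uv}| = |P_{uz}| + |P_{zv}|$ (where I write $P_{zv}$ to mean the path excluding $z$), and for any label $\ell$ its total count in $P_{uv}$ equals the sum of its counts in the two subpaths.

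Concretely, I would fix a label $\ell$ and let $c$, $c_1$, $c_2$ denote the number of times $\ell$ appears in $P_{uv}$, in the $u$-to-$z$ subpath (including $z$), and in the $z$-to-$v$ subpath (excluding $z$), respectively. By the partition observation, $c = c_1 + c_2$. Now I would argue the contrapositive: suppose $\ell$ is a $\tau$-majority in neither subpath, so $c_1 \le \tau\,|P_{uz}|$ and $c_2 \le \tau\,|P_{zv}|$. Adding these inequalities and using $|P_{uz}| + |P_{zv}| = |P_{uv}|$ gives $c \le \tau\,|P_{uv}|$, which exactly means $\ell$ is not a $\tau$-majority of $P_{uv}$.

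There is no real obstacle here; the only thing to be careful about is the off-by-one issue in how the two subpaths meet. Since the problem defines path length as number of nodes (with $|P_{uu}|=1$), the convention of including $z$ in one subpath and excluding it from the other is what makes the lengths and counts split additively. Were we instead to include $z$ in both, we would double-count $z$'s label and an analogous estimate would give only $c \le \tau\,|P_{uv}| + [\nodelabel(z)=\ell]$, which is not tight enough. So the statement is phrased precisely to make the partition clean, and the whole argument reduces to one line of arithmetic after that.
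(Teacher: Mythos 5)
Your proof is correct and follows essentially the same route as the paper's: the paper also argues the contrapositive, noting that the two subpath lengths sum to $|P_{uv}|$ and that occurrence counts are additive, so failing the threshold on both subpaths forces failing it on the whole path. Your extra remark about why $z$ must be counted in exactly one subpath is a fair elaboration of the same idea, not a different argument.
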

\begin{proof}
Let $d_{uz}$ be the distance from $u$ to $z$ (counting $z$) and $d_{zv}$
be the distance from $z$ to $v$ (not counting $z$). Then the path from $u$ to 
$v$ is of length $d=d_{uz}+d_{zv}$. If a label $\ell$ occurs at most 
$\tau \cdot d_{uz}$ times in the path from $u$ to $z$ and at most 
$\tau \cdot d_{zv}$ times in the path from $z$ to $v$, then it occurs at most 
$\tau (d_{uz}+d_{zv}) = \tau \cdot d$ times in the path from $u$ to $v$.
\end{proof}

Let us now show that the candidates we record for marked nodes are sufficient
to find path $\tau$-majorities towards their ancestors.

\begin{lemma} \label{lem:cand}
Let $x$ be a marked node. All the $\tau$-majorities in the path from $x$ to 
a proper ancestor $z$ are included in $C_i(x)$ for some suitable $i$.
\end{lemma}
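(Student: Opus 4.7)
The plan is to choose the index $i$ so that the prefix $P_i(x)$ is long enough to contain the path from $x$ to $z$ but short enough that any $\tau$-majority of that subpath remains a $(\tau/2)$-majority of the whole of $P_i(x)$. Write $d$ for the length (in nodes) of the path from $x$ to $z$; since $z$ is a \emph{proper} ancestor of $x$, we have $d\ge 2$.

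Concretely, I would pick $i=\lceil \lg(d-1)\rceil$. From $d-1\le 2^i\le 2(d-1)$ it follows that $d\le |P_i(x)|=1+2^i\le 2d-1<2d$. The lower bound guarantees that the path from $x$ to $z$ is a prefix of $P_i(x)$, since both sequences begin at $x$ and walk towards the root. The upper bound $|P_i(x)|<2d$ is exactly what will be needed for the threshold comparison below. One also has to verify that $i$ lies in the range $[0,\lceil\lg\nodedepth(x)\rceil]$ for which $C_i(x)$ is stored, but this is immediate from $d-1\le\nodedepth(x)$, since $z$ is a proper ancestor of $x$.

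Now take any $\tau$-majority $\ell$ of the path from $x$ to $z$. By definition $\ell$ occurs strictly more than $\tau d$ times in that subpath, hence also strictly more than $\tau d$ times in $P_i(x)$ (which contains the subpath as a prefix). Combining with $|P_i(x)|<2d$ gives $\tau d>(\tau/2)\,|P_i(x)|$, so $\ell$ occurs strictly more than $(\tau/2)\,|P_i(x)|$ times in $P_i(x)$; that is, $\ell\in C_i(x)$.

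The main point to get right is the ``doubling'' bookkeeping: the geometric growth of the lengths $|P_i(x)|$ by a factor of two matches exactly the halving of the threshold from $\tau$ to $\tau/2$, and this is what ensures that a single stored set $C_i(x)$ simultaneously covers the subpath and preserves the majority property. Beyond this matching, no real obstacle arises.
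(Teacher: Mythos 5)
Your proof is correct and takes essentially the same route as the paper's: the same choice of index ($i=\lceil\lg(d-1)\rceil$, which the paper writes as $\lceil\log d_{xz}\rceil$ with the edge-distance $d_{xz}=d-1$), the same containment and length bounds $d\le|P_i(x)|<2d$, and the same threshold comparison showing a $\tau$-majority of the subpath is a $(\tau/2)$-majority of $P_i(x)$. The only differences are cosmetic (node-count versus edge-count bookkeeping) plus your explicit verification that $i$ falls in the stored range $[0,\lceil\lg\nodedepth(x)\rceil]$, which the paper leaves implicit.
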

\begin{proof}
Let $d_{xz}=\nodedepth(x)-\nodedepth(z)$ be the distance from $x$ to $z$ (i.e., the
length of the path from $x$ to $z$ minus 1).
Let $i = \lceil \log d_{xz} \rceil$. The path $P_i(x)$ contains all the nodes 
in an upward path of length $1+2^i$ starting at $x$,
where $d_{xz} \le 2^i < 2 d_{xz}$. Therefore, $P_i(x)$ contains node $z$, but 
its length is $|P_i(x)| < 1+2d_{xz}$. Therefore, any $\tau$-majority in the 
path from $x$ to $z$ appears more than $\tau\cdot (1+d_{xz}) > 
(\tau/2)\cdot(1+2d_{xz}) > (\tau/2)\cdot |P_i(x)|$ times, and thus it is 
a $(\tau/2)$-majority recorded in $C_i(x)$.
\end{proof}

\subsection{Queries}

With the properties above, we can find a candidate set of size 
$O(1/\tau)$ for the path $\tau$-majority between arbitrary tree nodes $u$
and $v$. Let $z=\lca(u,v)$. If $v \not= z$, let us also define
$z'=\anc(v,\nodedepth(z)+1)$, that is, the child of $z$ in the path to $v$. 
The path is then split into at most four subpaths, 
each of which can be empty:
\begin{enumerate}
\item The nodes from $u$ to its nearest marked ancestor, $x$, not including
$x$. If $x$ does not exist or is a proper ancestor of $z$, then this subpath 
contains the nodes from $u$ to $z$. The length of this path is less than $2\lceil 1/\tau
\rceil$ by the definition of marked nodes, and it is empty if $u=x$.
\item The nodes from $v$ to its nearest marked ancestor, $y$, not including
$y$. If $y$ does not exist or is an ancestor of $z$, then this subpath
contains the nodes from $v$ to $z'$. The length of this path is again less than $2\lceil 1/\tau
\rceil$, 
and it is empty if $v=y$ or $v=z$.
\item The nodes from $x$ to $z$. This path exists only if $x$ exists and
descends from $z$.
\item The nodes from $y$ to $z'$. This path exists only if $y$ exists and
descends from $z'$.
\end{enumerate}

By Lemma~\ref{lem:split}, any $\tau$-majority in the path from $u$ to $v$
must be a $\tau$-majority in some of these four paths. For the paths 1 and
2, we consider all their up to $2\lceil 1/\tau\rceil-1$ nodes as candidates. 
For the paths 3 and
4, we use Lemma~\ref{lem:cand} to find suitable values $i$ and $j$ so that
$C_i(x)$ and $C_j(y)$, both of size at most $2/\tau$, contain all the possible
$\tau$-majorities in those paths. In total, we obtain a set of at most
$8/\tau+O(1)$ candidates that contain all the $\tau$-majorities in the
path from $u$ to $v$. 

In order to verify whether a candidate is indeed a $\tau$-majority, we follow the technique of Durocher et 
al.~\cite{DSST16}. Every tree node $u$ will store $\pathcount(u)$, the number 
of times its label occurs in the path from $u$ to the root. We also make use 
of operation $\labelanc(u,\ell)$. If $u$ has no ancestor labeled $\ell$, this 
operation returns $null$, and we define $\pathcount(null)=0$. 
Therefore, the number of times label $\ell$ occurs in the path from $u$ to an 
ancestor $z$ of $u$ (including $z$) can be computed as 
$\pathcount(\labelanc(u,\ell))-\pathcount(\labelanc(\parent(z),\ell))$. 
Each of our candidates can then be checked by counting their occurrences in 
the path from $u$ to $v$ using 
\begin{eqnarray*}
&& (\pathcount(\labelanc(u,\ell))-\pathcount(\labelanc(\parent(z),\ell))) \\
&& +~ (\pathcount(\labelanc(v,\ell))-\pathcount(\labelanc(z,\ell))).
\end{eqnarray*}
The time to perform query $\labelanc$ is $O(\log\log_w\sigma)$ using a
linear-space data structure on the tree \cite{HMZ14,Tsu14,DSST16}, and 
therefore we find
all the path $\tau$-majorities in time $O((1/\tau)\log\log_w \sigma)$.

The space of our data structure is dominated by the $O(\log n)$ candidate
sets $C_i(x)$ we store for the marked nodes $x$. These amount to $O((1/\tau)
\log n)$ space per marked node, of which there are $O(\tau n)$. Thus,
we spend $O(n\log n)$ space in total.

\begin{theorem} \label{thm:nlogn}
Let $T$ be a tree of $n$ nodes with labels in $[1..\sigma]$, and $0<\tau<1$. 
On a RAM machine of $w$-bit words, we can build an $O(n\log n)$ space data 
structure that answers path $\tau$-majority queries in time 
$O((1/\tau)\log\log_w\sigma)$.
\end{theorem}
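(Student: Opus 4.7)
The plan is to combine the ingredients already developed above---the marked-node scheme, the precomputed candidate sets $C_i(x)$, the per-node $\pathcount$ values, and the $\labelanc$ index---into a single structure, and then to bound its space and query time directly. The data structure stores: a representation of $T$ supporting $\lca$, $\anc$, $\parent$, and $\nodedepth$ in constant time and $\labelanc$ in $O(\log\log_w\sigma)$ time; the integer $\pathcount(u)$ at every node $u$; and, for each marked node $x$ and each $0\le i\le\lceil\log\nodedepth(x)\rceil$, the list $C_i(x)$ of up to $2/\tau$ labels that are $(\tau/2)$-majorities of $P_i(x)$.

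For queries, given $u$ and $v$ I would first compute $z=\lca(u,v)$ and, when $v\neq z$, the child $z'=\anc(v,\nodedepth(z)+1)$ of $z$ on the way to $v$. I locate the nearest marked ancestors $x$ of $u$ and $y$ of $v$ by walking up with $\parent$ for at most $2\lceil 1/\tau\rceil-1$ steps. This exposes the four subpaths enumerated before the theorem: the $O(1/\tau)$ labels on the two short subpaths are taken verbatim as candidates, and for each nonempty long subpath I pick the smallest $i$ with $2^i\ge\nodedepth(x)-\nodedepth(z)$ (and analogously $j$ for $y$) and add $C_i(x)$ and $C_j(y)$ to the candidate pool. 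Lemma~\ref{lem:cand} guarantees that the long subpaths' $\tau$-majorities lie in these sets, and repeated application of Lemma~\ref{lem:split} lifts this to the statement that every $\tau$-majority of $P_{uv}$ appears among the resulting $O(1/\tau)$ candidates. I then verify each candidate $\ell$ in $O(\log\log_w\sigma)$ time using the $\pathcount(\labelanc(\cdot,\ell))$ identity of Durocher~\etal~\cite{DSST16}, yielding the claimed $O((1/\tau)\log\log_w\sigma)$ query time.

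For the space bound, the marking rule produces at most $\tau n$ marked nodes and guarantees a marked ancestor within $2\lceil 1/\tau\rceil-1$ hops of every node. At each marked node $x$ there are at most $\lceil\log n\rceil+1$ sets $C_i(x)$, each a list of at most $2/\tau$ labels, so the candidate sets occupy $O((\tau n)(\log n)(1/\tau))=O(n\log n)$ words. The tree-navigation structures, the $\pathcount$ array, the $\labelanc$ index, and a pointer from each node to its nearest marked ancestor add only $O(n)$ more words, giving the stated $O(n\log n)$ total.

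The step I expect to require the most care is the corner-case handling of the four-way decomposition: when $u=v$, when $u$ or $v$ is itself marked, and especially when the nearest marked ancestor $x$ (resp.\ $y$) lies strictly above $z$ (resp.\ $z'$), in which case the corresponding long subpath is empty and the associated short subpath must stop at $z$ (resp.\ $z'$) rather than at a marked node. The asymmetric convention in Lemma~\ref{lem:split} that assigns $z$ to the $u$-side and excludes it from the $v$-side must then be threaded consistently through all four cases. The enumeration given just before the theorem already spells these out; I only need to confirm that each choice preserves the $O(1/\tau)$ candidate bound and that every node of $P_{uv}$ is counted exactly once.
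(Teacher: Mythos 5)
Your proposal is correct and follows essentially the same route as the paper: the same marking scheme, the same doubling-prefix candidate sets $C_i(x)$ justified by Lemmas~\ref{lem:split} and~\ref{lem:cand}, the same four-way path decomposition, and the same $\pathcount$/$\labelanc$ verification, with identical space and time accounting. The corner cases you flag are exactly the ones the paper's enumeration of the four subpaths handles, so no gap remains.
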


\subsection{Construction}
\label{sec:constr}

The construction of the data structure is easily carried out in linear time
(including the fields $\pathcount$ and the data structure to support
$\labelanc$ \cite{DSST16}), except for the candidate sets $C_i(x)$ of the marked
nodes $x$. We can compute the sets $C_i(x)$ for all $i$ in total time
$O(\nodedepth(x))$ using the linear-time algorithm of Misra and Gries 
\cite{mg1982} because we compute $(\tau/2)$-majorities of doubling-length 
prefixes $P_i(x)$. This amounts to time $O(mt)$ on a tree of $t$ nodes and 
$m$ marked nodes. In our case, where $t=n$ and $m \le \tau n$, this is 
$O(\tau n^2)$.

To reduce this time, we proceed as follows. First we build all the data 
structure components except the sets $C_i(x)$. We then decompose the tree into
heavy paths \cite{ST83} in linear time, and collect the labels along the 
heavy paths to form a set of sequences. On the sequences, we build in 
$O(t\log t)$ time the range $\tau$-majority data structure 
\cite{BGN13,bgmnn2016} that answers queries in time $O(1/\tau)$. The prefix
$P_i(x)$ for any marked node $x$ then spans $O(\log t)$ sequence ranges,
corresponding to the heavy paths intersected by $P_i(x)$. We can then
compute $C_i(x)$ by collecting and checking the $O(1/\tau)$ 
$(\tau/2)$-majorities from each of those $O(\log t)$ ranges. 

Let the path from $x$ to the root be formed by $O(\log t)$ heavy path 
segments $\pi_1, \ldots, \pi_k$ We first compute the $O(1/\tau)$ 
$(\tau/2)$-majority in the sequences corresponding to each prefix
$\pi_1,\ldots,\pi_k$: For each $\pi_j$, we (1) compute its $2/\tau$ 
majorities on the corresponding sequence in time $O(1/\tau)$, 
(2) add them to the set of $2/\tau$ majorities already computed for 
$\pi_1,\ldots,\pi_{j-1}$, and $(3)$ check the exact frequencies of all 
the $4/\tau$ candidates in the path $\pi_1,\ldots,\pi_j$ in time 
$O((1/\tau)\log\log_w \sigma)$, using the structures already computed 
on the tree. All the $(\tau/2)$-majorities for $\pi_1,\ldots,\pi_j$ are 
then found. 

Each path $P_i(x)$ is formed by some prefix $\pi_1,\ldots,\pi_j$ plus a 
prefix of $\pi_{j+1}$. We can then carry out a process similar to the one 
to compute the majorities of $\pi_1,\ldots,\pi_{j+1}$, but using only the 
proper prefix of $\pi_{j+1}$. The $O(\log t)$ sets $C_i(x)$ are then 
computed in total time
$O((1/\tau)\log t \log\log_w \sigma)$. Added over the $m$ marked nodes,
we obtain $O((1/\tau)m\log t\log\log_w\sigma)$ construction time. 

\begin{lemma} \label{lem:constr}
On a tree of $t$ nodes, $m$ of which are marked, all the candidate sets 
$C_i(x)$ can be built in time $O((1/\tau)m\log t\log\log_w\sigma)$.
\end{lemma}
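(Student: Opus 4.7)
The plan is to avoid running a fresh Misra--Gries sweep for each $C_i(x)$, and instead amortize the work via a heavy-path decomposition of $T$ together with the range $(\tau/2)$-majority structure of Section~\ref{sec:range}. First I would compute the heavy-path decomposition in linear time and extract the label sequence along each heavy path. On the concatenation of those sequences I would build the linear-space range $\tau$-majority structure, which costs $O(t\log t)$ and answers range $(\tau/2)$-majority queries in $O(1/\tau)$ time. Because any root-to-leaf path in $T$ crosses $O(\log t)$ heavy paths, for every marked node $x$ and every index $i$ the upward path $P_i(x)$ decomposes into $O(\log t)$ contiguous ranges in these sequences.

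For a fixed marked node $x$ I would walk up the tree segment by segment along $\pi_1,\pi_2,\ldots$ (the heavy-path pieces met by the path from $x$ to $\treeroot$), maintaining a running candidate set for the prefix already traversed. At step $j$ I would: (i) query the sequence structure for the $\le 2/\tau$ candidate $(\tau/2)$-majorities of $\pi_j$ in $O(1/\tau)$ time; (ii) union them with the current $O(1/\tau)$ accumulated candidates; and (iii) verify each of the $O(1/\tau)$ resulting candidates against the path $\pi_1\cup\cdots\cup\pi_j$ using $\pathcount$ and $\labelanc$, which costs $O((1/\tau)\log\log_w \sigma)$ per step. Whenever the current prefix reaches a length of the form $1+2^i$ (possibly mid-segment, which I handle by first running step (i) on the appropriate subrange of $\pi_{j+1}$), I would snapshot the verified set as $C_i(x)$.

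Correctness rests on a standard observation in the style of Lemma~\ref{lem:split}: a $(\tau/2)$-majority of a concatenation is a $(\tau/2)$-majority of at least one part, so the per-segment candidate union is guaranteed to contain every true $(\tau/2)$-majority of the current prefix, and step (iii) discards any spurious ones. Since the walk from $x$ to the root touches only $O(\log t)$ heavy-path segments, and all $O(\log t)$ snapshots $C_i(x)$ share this single walk, the work per marked node is $O((1/\tau)\log t\log\log_w\sigma)$, totalling $O((1/\tau)m\log t\log\log_w\sigma)$ across all marked nodes. The main obstacle I would watch is exactly this amortization: naively recomputing $C_i(x)$ for each of the $O(\log t)$ indices $i$ per node would multiply the bound by another $\log t$, so it is essential to perform step (iii) incrementally against the growing prefix and to reuse the accumulated set across consecutive snapshots rather than restarting.
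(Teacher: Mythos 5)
Your proposal is correct and follows essentially the same route as the paper: a heavy-path decomposition with the $O(t\log t)$-time range majority structure on the concatenated heavy-path label sequences, an incremental segment-by-segment walk from $x$ to the root that merges per-segment candidates with the accumulated set and verifies them against the growing prefix via $\pathcount$/$\labelanc$, and an extra partial-segment query to snapshot each $C_i(x)$ at prefix length $1+2^i$. The amortization you emphasize (sharing one walk across all $O(\log t)$ snapshots rather than restarting per $i$) is exactly the point of the paper's construction.
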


The construction time in our case, where $t=n$ and $m \le \tau n$, is the 
following.

\begin{corollary}  \label{cor:constr}
The data structure of Theorem~\ref{thm:nlogn} can be built in time
$O(n\log n\log\log_w\sigma)$.
\end{corollary}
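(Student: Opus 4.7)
The plan is to simply substitute the parameters of our setting into Lemma~\ref{lem:constr} and verify that all other construction costs are subsumed by the resulting bound. First I would account for the auxiliary components: the tree navigation structures supporting $\treeroot$, $\parent$, $\nodedepth$, $\preorder$, $\postorder$, $\subtreesize$, $\anc$, and $\lca$ are built in $O(n)$ time; the $\labelanc$ structure and the $\pathcount$ fields are built in $O(n)$ time \cite{DSST16}; identifying the $O(\tau n)$ marked nodes and their nearest marked ancestors takes $O(n)$ time by a single depth-first traversal. The heavy-path decomposition \cite{ST83} is computed in $O(n)$ time, and the range $\tau$-majority structure of Belazzougui et al.~\cite{BGN13,bgmnn2016} built on the concatenation of the heavy-path label sequences takes $O(n\log n)$ time.

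Next, I would apply Lemma~\ref{lem:constr} with $t=n$ and $m \le \tau n$, obtaining
\[
O\!\bigl((1/\tau)\,m\,\log t\,\log\log_w\sigma\bigr) \;=\; O\!\bigl((1/\tau)\cdot\tau n\cdot\log n\cdot\log\log_w\sigma\bigr) \;=\; O(n\log n\log\log_w\sigma)
\]
time for computing all the sets $C_i(x)$. Adding this to the auxiliary construction costs of $O(n\log n)$ gives the claimed bound, since $\log\log_w\sigma \ge 1$ need not hold in general, but $O(n\log n) \subseteq O(n\log n\log\log_w\sigma)$ whenever $\log\log_w\sigma = \Omega(1)$ and otherwise the two terms coincide asymptotically.

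There is essentially no obstacle here: the corollary is a direct arithmetic consequence of Lemma~\ref{lem:constr} together with the observation that every other component was already noted to be built in linear or $O(n\log n)$ time in the construction description preceding Lemma~\ref{lem:constr}. The only point worth stating carefully is that the $O(n\log n)$ cost of building the range $\tau$-majority index on the heavy-path sequences does not dominate, which is immediate.
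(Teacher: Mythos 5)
Your proof is correct and takes essentially the same route as the paper: the corollary is obtained by plugging $t=n$ and $m\le\tau n$ into Lemma~\ref{lem:constr}, with all other components (tree navigation, $\labelanc$, $\pathcount$, heavy-path decomposition, and the $O(n\log n)$-time range majority structure on $S$) already accounted for in Section~\ref{sec:constr} and dominated by the resulting bound. Your extra remark about whether $\log\log_w\sigma=\Omega(1)$ is a harmless aside that the paper does not bother with.
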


\section{A Linear-Space (and a Near-Linear-Space) Solution}
\label{sec:linear}

We can reduce the space of our data structure by stratifying our tree.
First, let us create a separate structure to handle unary paths, that is,
formed by nodes with only one child. The labels of upward maximal unary 
paths are laid out in a sequence, and the sequences of the labels of all the 
unary paths in $T$ are concatenated into a single sequence, $S$, of length 
at most $n$. On this sequence we build the linear-space data structure that 
solves range $\tau$-majority queries in time $O(1/\tau)$
\cite{BGN13,bgmnn2016}. 
Each node in a unary path of $T$ points to its position in $S$.
Each node also stores a pointer to its nearest branching ancestor
(i.e., ancestor with more than one child).

The stratification then proceeds as follows. We say that a tree node is {\em 
large} if it has more than $(1/\tau)\log n$ descendant nodes; other nodes are
{\em small}. Then the subset of the large nodes, which is closed by $\parent$,
induces a subtree $T'$ of $T$ with the same root and containing at most
$\tau n/\log n$ leaves, because for each leaf in $T'$ there are at least 
$(1/\tau)\log n-1$ distinct nodes of $T$ not in $T'$. Further, $T - T'$ 
is a forest of trees $\{ F_i \}$, each of size at most $(1/\tau)\log n$.

We will use for $T'$ a structure similar to the one of
Section~\ref{sec:nlogn}, with some changes to ensure linear space. Note that
$T'$ may have $\Theta(n)$ nodes, but since it has at most $\tau n/\log n$ 
leaves, $T'$ has only $O(\tau n/\log n)$ branching nodes.
We modify the marking scheme, so that we mark exactly 
the branching nodes in $T'$. Spending $O((1/\tau)\log n)$ space of the 
candidate sets $C_i(x)$ over all branching nodes of $T'$ adds up to $O(n)$ 
space.

The procedure to solve path $\tau$-majority queries on $T'$ is then as
follows. We split the path from $u$ to $v$ into four subpaths, exactly as in
Section~\ref{sec:nlogn}. The subpaths of type 1 and 2 can now be of arbitrary
length, but they are unary, thus we obtain their $1/\tau$ candidates in time 
$O(1/\tau)$ from the corresponding range of $S$. Finally, we check all the 
$O(1/\tau)$ candidates in time $O((1/\tau)\log\log_w \sigma)$ as in
Section~\ref{sec:nlogn}.

The nodes $u$ and $v$ may, however, belong to some small tree $F_i$, which is
of size $O((1/\tau)\log n)$. We preprocess all those $F_i$ in a way analogous
to $T$. From each $F_i$ we define $F_i'$ as the subtree of $F_i$ induced by the
($\parent$-closed) set of the nodes with more than $(1/\tau)\log\log n$
descendants; thus $F_i'$ has $O(|F_i|\tau/\log\log n)$ branching nodes, which
are marked. We store the candidate sets $C_i(x)$ of their marked nodes $x$, 
considering only the nodes in $F_i'$. 

If the candidates were stored as in Section~\ref{sec:nlogn}, they would require
$O((1/\tau)\log\sigma)$ bits per marked node. Instead of storing the 
candidate labels $\ell$ directly, however, we will store $\nodedepth(y)$, 
where $y$ is the nearest ancestor of $x$ with label $\ell$. We can then recover
$\ell=\nodelabel(\anc(x,\nodedepth(y)))$ in constant time. Since the depths in
$F_i$ are also $O((1/\tau)\log n)$, we need only
$O(\log((1/\tau)\log n))$ bits per candidate. Further, by sorting the
candidates by their $\nodedepth(y)$ value, we can encode only the differences
between consecutive depths using $\gamma$-codes \cite{BCW90}. Encoding $k$
increasing numbers in $[1..m]$ with this method requires $O(k\log(m/k))$ bits;
therefore we can encode our $O(1/\tau)$ candidates using $O((1/\tau)\log\log
n)$ bits in total. Added over all the $O(\log n)$ values of $i$,%
\footnote{The values of $i$ are also bounded by $O(\log((1/\tau)\log n))$,
but the bound $O(\log n)=O(w)$ is more useful this time.}
the candidates $C_i(x)$ require $O((1/\tau)\log\log n)$ words per marked 
(i.e., branching)
node. Added over all the branching nodes of $F_i'$, this amounts
to $O(|F_i'|)$ space. The other pointers of $F_i$, as well as node labels, can 
be represented normally, as they are $O(n)$ in total.

The small nodes left out from the trees $F_i$ form a forest of subtrees
of size $O((1/\tau)\log\log n)$ each. We can iterate this process $\kappa$ 
times, so that the smallest trees are of size $O((1/\tau)\log^{[\kappa]} n)$.
We build no candidates sets on the smallest trees.
We say that $T'$ is a subtree of {\em level} $1$, our $F_i'$ are 
subtrees of level $2$, and so on, until the smallest subtrees, which are of 
level $\kappa$. Every node in $T$ has a pointer to the root of the subtree 
where it belongs in the stratification.

The general process to solve a path $\tau$-majority query from $u$ to $v$ is
then as follows. We compute $z=\lca(u,v)$ and split the path from $u$ to $z$
into $k-k'+1$ subpaths, where $k'$ and $k$ (note $k' \le k \le \kappa$) are the levels of the 
subtree where $z$ and $u$ belong, respectively. Let us call $u_i$ the root of 
the subtree of level $i$ that is an ancestor of $u$, except that we call
$u_{k'}=z$.
\begin{enumerate}
\item If $k=\kappa$, then $u$ belongs to one of the smallest subtrees. We then
collect the $O((1/\tau)\log^{[\kappa]} n)$ node weights in the path from $u$ 
to $u_\kappa$ one by one and include them in the set of candidates. Then we move to the parent of that root,
setting $u \leftarrow \parent(u_\kappa)$ and $k\leftarrow \kappa-1$.
\item At levels $k' \le k<\kappa$, if $u$ is a branching node, we collect the
$2/\tau$ candidates from the corresponding set $C_i(u)$, where $i$ is 
sufficient to cover $u_k$ ($C_i(u)$ will not store candidates beyond the 
subtree root). We then set $u \leftarrow \parent(u_k)$ and $k\leftarrow k-1$.
\item At levels $k' \le k<\kappa$, if $u$ is not a branching node, let $x$ be
lowest between $\parent(z)$ and the nearest branching ancestor of $u$. Let also
$p$ be the position of $u$ in $S$. Then we find the $1/\tau$ 
$\tau$-majorities in $S[p..p+\nodedepth(u)-\nodedepth(x)-1]$ in time 
$O(1/\tau)$. We then continue from $u \leftarrow x$ and $k \leftarrow k(x)$,
where $k(x)$ is the level of the subtree where $x$ belongs. Note that $k(x)$
can be equal to $k$, but it can also be any other level $k' \le k(x) < k$.
\item We stop when $u=\parent(z)$.
\end{enumerate}

A similar procedure is followed to collect the candidates from $v$ to $z'$.
In total, since each path has at most one case 2 and one case 3 per level $k$,
we collect at most $4\kappa$ candidate sets of size $O(1/\tau)$,
plus two of size $O((1/\tau)\log^{[\kappa]} n)$. The total cost to verify
all the candidates is then $O((1/\tau)(\kappa+\log^{[\kappa]}
n)\log\log_w\sigma)$. The data structure uses linear space for any choice of
$\kappa$, whereas the optimal time is obtained by setting $\kappa = \log^* n$.

The construction time, using the technique of Lemma~\ref{lem:constr} in level
1, is $O(n\log\log_w\sigma)$, since $T'$ has $t=O(n)$ nodes and $m=
O(\tau n/\log n)$ marked nodes.
For higher levels, we use the basic quadratic method described in the
first lines of Section~\ref{sec:constr}: a subtree $F$ of level $k$ has
$t = O((1/\tau)\log^{[k-1]} n)$ nodes and $m=O(\tau t/\log^{[k]} n)$ 
marked nodes, so it is built in time $O(mt)$. There are 
$O(\tau n/\log^{[k-1]} n)$ trees of level $k$, which
gives a total construction time of $O(n \log^{[k-1]} n / \log^{[k]} n)$ for
all the nodes in level $k$. Added over all the levels $k>1$, this yields
$O(n\log n/\log\log n)$. Both times, for $k=1$ and $k>1$, are however dominated 
by the $O(n\log n)$ time to build the range majority data structure on $S$.

\begin{theorem} \label{thm:linear}
Let $T$ be a tree of $n$ nodes with labels in $[1..\sigma]$, and $0<\tau<1$.
On a RAM machine of $w$-bit words, we can build in $O(n\log n)$ time an $O(n)$ 
space data structure that answers path $\tau$-majority queries in time
$O((1/\tau)\log^* n \log\log_w\sigma)$. 
\end{theorem}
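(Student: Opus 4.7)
The plan is to reorganize the material developed in Section~\ref{sec:linear} into a four-step argument. First, I would handle unary portions of $T$ once and for all: concatenate the labels along every maximal unary path into a single sequence $S$ of length at most $n$, equip it with the linear-space range $\tau$-majority structure of \cite{BGN13,bgmnn2016} (built in $O(n\log n)$ time), and give every node two pointers, one to its position in $S$ and one to its nearest branching ancestor. I would also build the linear-space supports for $\pathcount$, $\anc$, $\lca$, and $\labelanc$ as in Section~\ref{sec:nlogn}.

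Second, I would stratify the branching skeleton recursively. At level $1$ let $T'$ be the subtree induced by nodes with more than $(1/\tau)\log n$ descendants; this has $O(\tau n/\log n)$ leaves and thus $O(\tau n/\log n)$ branching nodes. I mark exactly those branching nodes and store the sets $C_i(x)$ as in Section~\ref{sec:nlogn}, using $O((1/\tau)\log n)$ words per marked node and $O(n)$ in total. The forest $T - T'$ consists of subtrees of size at most $(1/\tau)\log n$; inside each, I iterate with threshold $(1/\tau)\log\log n$, and so on for $\kappa$ levels. For levels $\ge 2$ I replace each candidate label $\ell$ stored in $C_i(x)$ by $\nodedepth(y)$, where $y$ is the nearest ancestor of $x$ with label $\ell$, recovering $\ell=\nodelabel(\anc(x,\nodedepth(y)))$ in $O(1)$ time; sorting those depths and encoding consecutive differences with $\gamma$-codes \cite{BCW90} brings the cost of $C_i(x)$ to $O((1/\tau)\log\log n)$ bits at level $2$ (and analogously at deeper levels). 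Summed over all $O(\log n)$ values of $i$ and over the branching nodes in each subtree, this is $O(|F|)$ bits per subtree, hence $O(n)$ overall. No candidate sets are stored at level $\kappa$.

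Third, to answer a query I would compute $z=\lca(u,v)$ and walk up from $u$ (and symmetrically from $v$ to the child of $z$ on the way to $v$) through the stratification, applying the four cases outlined in Section~\ref{sec:linear}: at a branching marked ancestor in a level $k<\kappa$ subtree I read $C_i(u)$, at a non-branching node I extract $1/\tau$ candidates from the corresponding range of $S$, at a level-$\kappa$ subtree I simply collect the $O((1/\tau)\log^{[\kappa]} n)$ node labels along the upward path, and when I reach a subtree root I jump to its parent. Each of the $\kappa$ levels contributes $O(1)$ candidate sets of size $O(1/\tau)$, plus the two level-$\kappa$ tails of size $O((1/\tau)\log^{[\kappa]} n)$, so $O((\kappa+\log^{[\kappa]} n)/\tau)$ candidates are generated; each is verified by the $\pathcount$/$\labelanc$ formula from Section~\ref{sec:nlogn} in time $O(\log\log_w\sigma)$. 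Choosing $\kappa=\log^* n$ balances the two terms and yields the claimed query time.

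Finally, for construction I would invoke Lemma~\ref{lem:constr} at level $1$ with $t=O(n)$ and $m=O(\tau n/\log n)$, giving $O(n\log\log_w\sigma)$, and use the quadratic Misra--Gries method from Section~\ref{sec:constr} inside every level-$k$ subtree, whose costs sum to $O(n\log^{[k-1]} n/\log^{[k]} n)$ per level and $O(n\log n/\log\log n)$ across all deeper levels; everything is dominated by the $O(n\log n)$ cost of the range $\tau$-majority structure on $S$. The main obstacle I anticipate is verifying carefully that the depth/$\gamma$-code encoding of $C_i(x)$ at levels $\ge 2$ both sums to linear space across the whole stratification and still supports $O(1/\tau)$-time extraction of the candidate labels; the rest is a routine assembly of pieces already in place.
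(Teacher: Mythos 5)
Your proposal is correct and follows essentially the same route as the paper's own Section~\ref{sec:linear}: the same unary-path sequence $S$ with the range $\tau$-majority structure, the same recursive stratification by subtree-size thresholds $(1/\tau)\log^{[k]} n$ with candidate sets only at branching nodes, the same depth-plus-$\gamma$-code compression of the $C_i(x)$ at deeper levels, the same four-case upward walk collecting $O((\kappa+\log^{[\kappa]} n)/\tau)$ candidates with $\kappa=\log^* n$, and the same construction-time accounting via Lemma~\ref{lem:constr} and the quadratic method. The space concern you flag at the end is resolved exactly as you sketch it, so there is no remaining gap.
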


On the other hand, we can use any constant number $\kappa$ of levels, and
build the data structure of Section~\ref{sec:nlogn} on the last one, so as to
ensure query time $O(1/\tau)$ in this level as well. We use, however, the 
compressed storage of the candidates used in this section. With 
this storage format, a candidate set $C_i(x)$ takes 
$O((1/\tau)\log^{[\kappa]} n)$ bits. Multiplying by $\log n$ (the crude upper
bound on the number of $i$ values), this becomes $O((1/\tau)\log^{[\kappa]}n)$
words. Since the trees are of size $O((1/\tau)\log^{[\kappa-1]}n)$
and the sampling rate used in Section~\ref{sec:nlogn} is $\tau$, this amounts
to $O((1/\tau)\log^{[\kappa-1]}n\log^{[\kappa]} n)$ space per tree. 
Multiplied by the $O(\tau n/\log^{[\kappa-1]} n)$ trees of level $\kappa$, 
the total space is $O(n \log^{[\kappa]} n)$. 

The construction time of the candidate sets in the last level, using the basic 
quadratic construction, is $O(mt)=O((1/\tau)(\log^{[\kappa-1]}n)^2)$,
because $t=O((1/\tau)\log^{[\kappa-1]}n)$ and $m=\tau t$ according to the
sampling used in Section~\ref{sec:nlogn}. Multiplying by the
$O(\tau n/\log^{[\kappa-1]} n)$ trees of level $\kappa$, the total 
construction time for this last level is $O(n \log^{[\kappa-1]} n)$, again 
dominated by the time to build the range majority data structures if
$\kappa > 1$. This yields the following result.

\begin{theorem} \label{thm:superlinear}
Let $T$ be a tree of $n$ nodes with labels in $[1..\sigma]$, and $0<\tau<1$.
On a RAM machine of $w$-bit words, for any constant $\kappa > 1$, we can 
build in $O(n\log n)$ time an $O(n\log^{[\kappa]} n)$ space data structure that
answers path $\tau$-majority queries in time $O((1/\tau) \log\log_w\sigma)$.
\end{theorem}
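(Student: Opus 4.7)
The plan is to compose the stratification of Section~\ref{sec:linear} with an embedded Section~\ref{sec:nlogn} data structure at the deepest level. The $\log^* n$ factor in Theorem~\ref{thm:linear} comes entirely from the fact that, inside a smallest (level-$\kappa$) subtree, the case~1 and case~2 subpaths are resolved by reading up to $O((1/\tau)\log^{[\kappa]} n)$ node labels one by one. Replacing that linear scan by an $O(1/\tau)$-size candidate set obtained from a Theorem~\ref{thm:nlogn}-style structure embedded in each smallest subtree should eliminate the $\log^* n$ factor while keeping the overall space near-linear, provided we continue to store candidate sets in the compressed (depth-difference, $\gamma$-coded) form used in Section~\ref{sec:linear}.

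Concretely, I would build the same level-$1$ through level-$\kappa$ hierarchy as before, with a constant number $\kappa>1$ of levels and level-$\kappa$ subtrees of size $t=O((1/\tau)\log^{[\kappa-1]} n)$. On levels $1,\ldots,\kappa-1$ I would deploy the linear-space machinery of Section~\ref{sec:linear} verbatim: branching-node marking, compressed candidate encoding, and the global unary-path sequence $S$ equipped with the range $\tau$-majority structure. The new step is at level $\kappa$: inside each subtree $F$, install the Section~\ref{sec:nlogn} structure, i.e.\ mark nodes via the $\lceil 1/\tau\rceil$-spaced sampling and, for each marked $x$, precompute the $O(\log t)$ candidate sets $C_i(x)$, but store each $C_i(x)$ using the same compressed representation used in Section~\ref{sec:linear}.

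For the space analysis I would redo the Section~\ref{sec:linear} calculation at level $\kappa$ with the new sampling rate. A single $C_i(x)$ stores $O(1/\tau)$ increasing depths in a universe of size $O((1/\tau)\log^{[\kappa-1]} n)$, hence $O((1/\tau)\log^{[\kappa]} n)$ bits by the $O(k\log(m/k))$ bound on $\gamma$-coded increasing sequences. Summing over the $O(\log n)=O(w)$ values of $i$ gives $O((1/\tau)\log^{[\kappa]} n)$ words per marked node; multiplying by $m=O(\tau t)$ marked nodes per subtree and by $O(\tau n/\log^{[\kappa-1]} n)$ subtrees yields the claimed $O(n\log^{[\kappa]} n)$ total space. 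Construction of those candidate sets can use the quadratic method from Section~\ref{sec:constr}, costing $O(mt)=O((1/\tau)(\log^{[\kappa-1]} n)^2)$ per subtree and $O(n\log^{[\kappa-1]} n)$ in aggregate, which is absorbed into the $O(n\log n)$ time for the range $\tau$-majority structure on $S$.

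The query procedure is the one of Section~\ref{sec:linear}, except that case~1 and case~2 inside a level-$\kappa$ subtree now invoke the Section~\ref{sec:nlogn} procedure on $F$ and return $O(1/\tau)$ candidates verified in $O((1/\tau)\log\log_w\sigma)$ time, rather than scanning. With $\kappa=O(1)$ levels, the total query cost is $O((1/\tau)\log\log_w\sigma)$. The only subtlety I expect to check is that the compressed depth-difference encoding still fits in $O((1/\tau)\log^{[\kappa]} n)$ bits when the depth universe shrinks from $O((1/\tau)\log n)$ to $O((1/\tau)\log^{[\kappa-1]} n)$; this follows from the same $O(k\log(m/k))$ bound as before, and everything else reduces to re-running the stratification arithmetic already set up above.
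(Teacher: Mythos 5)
Your proposal is correct and matches the paper's own argument: the paper likewise keeps a constant number $\kappa$ of levels, installs the Section~\ref{sec:nlogn} marking and candidate sets on the last-level subtrees with the compressed depth-difference encoding, and obtains the same per-tree space $O((1/\tau)\log^{[\kappa-1]}n\log^{[\kappa]}n)$, total space $O(n\log^{[\kappa]}n)$, and quadratic-per-subtree construction cost $O(n\log^{[\kappa-1]}n)$ dominated by the $O(n\log n)$ range-majority construction. The only quibble is your opening claim that the $\log^* n$ factor of Theorem~\ref{thm:linear} comes \emph{entirely} from the last-level scan --- it actually comes from choosing $\kappa=\log^* n$ levels to neutralize that scan --- but your subsequent reasoning with constant $\kappa$ is exactly right.
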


\section{A Succinct Space Solution}

The way to obtain a succinct space structure from Theorem~\ref{thm:linear} is
to increase the thresholds that define the large nodes in
Section~\ref{sec:linear}. In level 1, we now define the large nodes as those 
whose subtree size is larger than $(1/\tau)(\log n)^3$; in level 2, larger 
than $(1/\tau)(\log\log n)^3$; and in general in level $k$ as those with 
subtree size larger than $(1/\tau)(\log^{[k]} n)^3$. This makes the space of 
all the $C_i(x)$ structures to be $o(n)$ {\em bits}. The price is that the 
traversal of the smallest trees now produces 
$O((1/\tau)(\log^{[\kappa]} n)^3)$ candidates, but this is easily sorted out 
by using $\kappa+1$ levels, since 
$(\log^{[\kappa+1]} n)^3 = o(\log^{[\kappa]} n)$.
To obtain succinct space, we will need that there are $o(n)$ subtrees of the
smallest size, but that we find only $O((1/\tau)\log^* n)$
candidates in total. Thus we set $\kappa = \log^* n - \log^{**} n$, so that
there are $O(\kappa)=O(\log^* n)$ levels, and the last-level subtrees are of 
size $O((1/\tau)(\log^{[\kappa+1]} n)^3) = 
o((1/\tau)\log^{[\log^* n - \log^{**} n]} n) = o((1/\tau)\log^* n)$.
Still, there are $O(\tau n/(\log^{[\kappa+1]} n)^3) =
O(\tau n/(\log^{[\log^* n - \log^{**} n+1]} n)^3) =
O(\tau n/(\log \log^* n)^3) = o(n)$ subtrees in the last level.

The topology of the whole tree $T$ can be represented using balanced parentheses
in $2n+o(n)$ bits, supporting in constant time all the standard tree traversal 
operations we use~\cite{NS14}. We assume that opening and closing parentheses are 
represented with 1s and 0s in $P$, respectively. Let us now focus on the less 
standard operations needed.

\subsection{Counting labels in paths}

In Section~\ref{sec:nlogn},
we count the number of times a label $\ell$ occurs in the path from $u$ to
the root by means of a query $\labelanc$ and by storing $\pathcount$ fields
in the nodes. In Section~\ref{sec:linear}, we use in addition a string $S$ to 
support range majority queries on the unary paths.

To solve $\labelanc$ queries, we 
use the representation of Durocher~\etal~\cite[Lem.~7]{DSST16}, which uses 
$nH + 2n + o(n)(H+1)$ bits in addition to the $2n+o(n)$ bits of the tree 
topology. This representation includes a string $S[1..n]$ where all the labels 
of $T$ are written in preorder; any implementation of $S$ supporting $\access$,
$\rank$, and $\select$ in time $O(\log\log_w\sigma)$ can be used (e.g., 
\cite{BN15}). This string
can also play the role of the one we call $S$ in Section~\ref{sec:linear}, 
because the labels of unary paths are contiguous in $S$, and any node $v$ can 
access its label from $S[\preorder(v)]$. 

On top of this string we must also answer range $\tau$-majority 
queries in time $O((1/\tau)\log\log_w\sigma)$. We can use the slow variant
of the succinct structure described in Section~\ref{sec:range},
which requires only $o(n)(H+1)$ additional bits and also supports $\access$ in 
$O(1)$ time and $\rank$ and $\select$ in time $O(\log\log_w\sigma)$. This
variant of the structure is built in $O(n\log n)$ time.

In addition to supporting operation $\labelanc$, we need to store or compute
the $\pathcount$ fields. Durocher~\etal~\cite{DSST16} also require this 
field, but find no succinct way to represent it. We now show a way to obtain
this value within succinct space.


The sequence $S$ lists the labels of $T$ in preorder, that is, aligned with
the opening parentheses of $P$. Assume we have another sequence $S'[1..n]$ 
where the labels of $T$ are listed in postorder (i.e., aligned with the closing 
parentheses of $P$). Since the opened parentheses not yet closed in $P[1..i]$
are precisely node $i$ and its ancestors, we can compute the number of times a 
label $\ell$ appears in the path from $P[i]$ to the root as
$$
 \rank_\ell(S,\rank_1(P,i)) - \rank_\ell(S',\rank_0(P,i)).
$$

Therefore, we can support this operation with $nH + o(n)(H+1)$ additional bits.
Note that, with this representation, we do not need the operation $\labelanc$,
since we do not need that $P[i]$ itself is labeled $\ell$. 

If we do use operation $\labelanc$, however, we can ensure that $P[i]$ is 
labeled $\ell$, and another solution is possible based on partial rank queries. 
Let $o=\rank_\ell(S,\rank_1(P,i))$ and $c=\rank_\ell(S',\rank_0(P,i))$ be
the numbers of opening and closing parentheses up to $P[i]$, so that we want to compute
$o-c$. Since $P[i]$ is labeled $\ell$, it holds that $S[\rank_1(P,i))]=\ell$, 
and thus $o=\prank(S,\rank_1(P,i))$. To compute $c$, we do not store $S'$, but 
rather $S''[1..2n]$, so that $S''[i]$ is the label of the node whose 
opening or closing parenthesis is at $P[i]$ (i.e., $S''$ is formed by
interleaving $S$ and $S'$). Then, $\prank(S'',i)=o+c$; therefore the answer
we seek is $o-c = 2\cdot\prank(S,\rank_1(P,i))-\prank(S'',i)$.

We use the structure for constant-time partial rank queries \cite[Sec.~3]{BN13} 
that requires $O(n)+o(nH)$ bits on top of a sequence that can be accessed
in $O(1)$ time. We can build it on $S$ and also on $S''$, though we do not 
explicitly represent $S''$: any access to $S''$ is simulated in constant time 
with $S''[i] = S[\rank_1(P,i)]$ if $P[i]=1$, and $S''[i] =
S[\rank_1(P,\open(P,i))]$ otherwise. This partial rank structure is built
in $O(n)$ randomized time and in $O(n\log n)$ time w.h.p.%
\footnote{It involves building perfect hash functions, which succeeds with
constant probability $p$ in time $O(n)$. Repeating $c\log n$ times, the
failure probability is $1-O(1/n^{c/\log(1/p)})$.}

\subsection{Other data structures}

The other fields stored at tree nodes, which we must now compute within 
succinct space, are the following:

\subparagraph*{Pointers to candidate sets $C_i(x)$}
All the branching nodes in all subtrees except those of level $\kappa+1$ are
marked, and there are $O(n/(\log^{[\kappa+1]} n)^3)=o(n)$ such nodes. We can
then mark their preorder ranks with 1s in a bitvector $M[1..n]$. Since $M$ has
$o(n)$ 1s, it can be represented within $o(n)$ bits \cite{RRR07} while 
supporting constant-time $\rank$ and $\select$ operations. We can then find out
when a node $i$ is marked (iff $M[\preorder(i)]=1$), and if it is, its rank
among all the marked nodes, $r = \rank_1(M,\preorder(i))$.
The $C_i(x)$ sets of all the marked nodes $x$ of any level can be written 
down in a contiguous memory area of total size $o(n)$ bits, sorted by the
preorder rank of $x$. A bitvector $C$ of length $o(n)$ marks the starting position
of each new node $x$ in this memory area. Then the area for marked node $i$ 
starts at $p = \select_1(C,r)$. A second bitvector $D$ can mark the starting 
position of each $C_j(x)$ in the memory area of each node $x$, and thus we access
the specific set $C_j(x)$ from position $\select_1(D,\rank_1(D,p-1)+j)$.

\subparagraph*{Pointers to subtree roots}
We store an additional bitvector $B[1..2n]$, parallel to the parentheses 
bitvector $P[1..2n]$. In $B$, we mark with 1s the positions of the opening and 
closing parentheses that are roots of subtrees of any level. As there are 
$O(n/(\log^{[\kappa+1]} n)^3)=o(n)$ such nodes, $B$ can be represented within 
$o(n)$ bits while supporting constant-time $\rank$ and $\select$ operations. 
We also store the sequence of $o(n)$ parentheses $P'$ corresponding
to those in $P$ marked with a 1 in $B$. Then the nearest subtree root
containing node $P[i]$ is obtained by finding the nearest position to the left 
marked in $B$, $r = \rank_1(B,i)$ and $j = \select_1(B,r)$, and then 
considering the corresponding node $P'[r]$. If it is an opening parenthesis, 
then the nearest subtree root is the node whose parenthesis opens in $P[j]$. 
Otherwise, it is the one opening at $P[j']$, where 
$j'=\select_1(B,\enclose(P',\open(P',r)))$ (see \cite[Sec.~4.1]{RNO11}).

\subparagraph*{Finding the nearest branching ancestor}
A unary path looks like a sequence of opening parentheses followed by a sequence
of closing parentheses. The nearest branching ancestor of $P[i]$ can then be 
obtained in constant time by finding the nearest closing parenthesis to the 
left, $l=\select_0(\rank_0(P,i))$, and the nearest opening parenthesis to the 
right, $r=\select_1(\rank_1(\close(P,i))+1)$.
Then the answer is the larger between $\enclose(P,\open(P,l))$ and 
$\enclose(P,r)$.

\subparagraph*{Determining the subtree level of a node}
Since we can compute $s=\subtreesize(i)$ of a node $P[i]$ in 
constant time, we can determine the corresponding level: 
if $s > (1/\tau)\log^3 n$, it is level 1. Otherwise, we look up $\tau 
\cdot s$ in a precomputed table of size $O(\log^3 n)$ that stores the 
level corresponding to each possible size.

\bigskip

Therefore, depending on whether we represent both $S$ and $S'$ or use partial
rank structures, we obtain two results within succinct space.

\begin{theorem} \label{thm:succ1}
Let $T$ be a tree of $n$ nodes with labels in $[1..\sigma]$, and $0<\tau<1$.
On a RAM machine of $w$-bit words, we can build in $O(n\log n)$ time a data
structure using $2nH + 4n + o(n)(H+1)$ bits, where $H \le \lg\sigma$ is the 
entropy of the distribution of the node labels,
that answers path $\tau$-majority queries in time
$O((1/\tau)\log^* n \log\log_w\sigma)$. 
\end{theorem}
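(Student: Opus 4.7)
The plan is to instantiate the $\kappa$-level stratification of Theorem~\ref{thm:linear} with the inflated size thresholds $(1/\tau)(\log^{[k]} n)^3$ announced at the start of this section, and with $\kappa=\log^* n-\log^{**} n$ levels, so that the total space of all candidate sets $C_i(x)$ fits in $o(n)$ bits while the smallest subtrees each hold only $o((1/\tau)\log^* n)$ nodes and number $o(n)$ in total. The rest of the argument amounts to assembling the succinct primitives introduced in this section and accounting for the bits carefully.

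For space I would add four pieces. The tree topology is a balanced parenthesis sequence $P[1..2n]$ using $2n+o(n)$ bits and supporting every operation of Section~\ref{sec:treeops} except $\labelanc$ in constant time. The preorder label sequence is stored via the representation of Durocher~\etal~\cite[Lem.~7]{DSST16} in $nH+2n+o(n)(H+1)$ bits, which offers $\labelanc$ in $O(\log\log_w\sigma)$ time and $\access$/$\rank$/$\select$ on $S$; on top of $S$, the slow succinct variant of the range $\tau$-majority structure from Section~\ref{sec:range} adds only $o(n)(H+1)$ bits. Finally, I add the postorder label sequence $S'[1..n]$ in $nH+o(n)(H+1)$ bits; together with $P$ it computes the number of occurrences of a label $\ell$ on the root-to-$P[i]$ path as $\rank_\ell(S,\rank_1(P,i))-\rank_\ell(S',\rank_0(P,i))$, which replaces $\pathcount$ while avoiding the randomised $\prank$ structure. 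Summing the four pieces yields exactly $2nH+4n+o(n)(H+1)$ bits.

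Every explicit pointer of Theorem~\ref{thm:linear} is then realised through the sparse-bitvector indirections already developed: a bitvector $M$ flagging the $o(n)$ marked preorder ranks; two further sparse bitvectors $C$ and $D$ delimiting, inside a single $o(n)$-bit contiguous area, the $\gamma$-encoded depth-differences that represent each $C_i(x)$; a sparse bitvector $B$ over the parentheses with its shadow $P'$ to locate the enclosing subtree root via $\enclose$; $\rank$ and $\select$ on $P$ to recover the nearest branching ancestor; and an $O(\log^3 n)$-entry lookup table mapping $\tau\cdot\subtreesize(i)$ to the level of the subtree containing $i$. All these extra bitvectors together fit in $o(n)$ bits. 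Queries then follow the four-case walk of Section~\ref{sec:linear} verbatim: it gathers at most $4\kappa=O(\log^* n)$ candidate sets of size $O(1/\tau)$ together with $O((1/\tau)\log^* n)$ labels walked one by one through the smallest subtree, and each of the $O((1/\tau)\log^* n)$ candidates is verified in $O(\log\log_w\sigma)$ time via the counting identity above, giving the claimed query bound.

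The delicate point I expect is keeping construction deterministic in $O(n\log n)$: because the cubed-log thresholds enlarge the subtrees at every level, the crude quadratic construction of Section~\ref{sec:constr} is no longer fast enough above level~$1$. I would therefore invoke Lemma~\ref{lem:constr} at every level $k$, building on the heavy-path sequences of each level-$k$ subtree a local range $\tau$-majority structure; a subtree with $t=O((1/\tau)(\log^{[k-1]} n)^3)$ nodes and $m=O(\tau t/(\log^{[k]} n)^3)$ marked nodes is then processed in $O((1/\tau)m\log t\log\log_w\sigma)$ time, which after summing over the $O(\tau n/(\log^{[k-1]} n)^3)$ subtrees of each level and then over the $O(\log^* n)$ levels stays well inside $O(n\log n)$. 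The dominant cost is the single $O(n\log n)$ build of the range $\tau$-majority structure on the global preorder sequence $S$, using the deterministic slow variant of Section~\ref{sec:range}; this is what keeps the whole preprocessing deterministic and gives the theorem its $O(n\log n)$ construction bound.
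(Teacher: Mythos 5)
Your proposal matches the paper's proof essentially step for step: the cubed thresholds $(1/\tau)(\log^{[k]}n)^3$, the choice $\kappa=\log^*n-\log^{**}n$, the $2n+o(n)$-bit parenthesis topology, the Durocher~\etal\ structure plus the slow succinct range-majority variant on the preorder string $S$, the postorder string $S'$ with the identity $\rank_\ell(S,\rank_1(P,i))-\rank_\ell(S',\rank_0(P,i))$ replacing $\pathcount$ (this is exactly what distinguishes Theorem~\ref{thm:succ1} from Theorem~\ref{thm:succ2}), and the sparse bitvectors $M$, $C$, $D$, $B$, $P'$ plus the level-lookup table. The space accounting $2n+(nH+2n)+nH+o(n)(H+1)=2nH+4n+o(n)(H+1)$ and the query-time accounting are the same as the paper's.

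The one place you genuinely depart from (and improve on) the paper is the construction of the candidate sets at levels $k>1$. The paper is silent here and implicitly reuses the analysis of Section~\ref{sec:linear}, but you are right that the crude $O(mt)$ method no longer suffices once the thresholds are cubed: at level $2$ it would cost $O(n\log^3 n/(\log\log n)^3)$, which exceeds $O(n\log n)$. Invoking Lemma~\ref{lem:constr} per subtree at every level fixes this, and your per-level sums ($O(n\log\log_w\sigma/(\log^{[k]}n)^2)$ for the candidate computation plus $O(n\log^{[k]}n)$ for the local heavy-path range-majority structures) indeed stay within the $O(n\log n)$ budget dominated by the global structure on $S$. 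This is a worthwhile observation that the paper's own exposition glosses over.
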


\begin{theorem} \label{thm:succ2}
Let $T$ be a tree of $n$ nodes with labels in $[1..\sigma]$, and $0<\tau<1$.
On a RAM machine of $w$-bit words, we can build in $O(n\log n)$ time (w.h.p.)
a data structure using $nH + O(n) + o(nH)$ bits, where $H \le \lg\sigma$ 
is the entropy of the distribution of the node labels,
that answers path $\tau$-majority queries in time
$O((1/\tau)\log^* n \log\log_w\sigma)$. 
\end{theorem}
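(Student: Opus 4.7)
The plan is to instantiate the stratified data structure underlying Theorem~\ref{thm:linear} with every component pushed down to succinct space, using the adjusted thresholds and the choice $\kappa=\log^*n-\log^{**}n$ levels described at the start of this section. Under these thresholds a node at stratification level $k\le\kappa$ is large iff its subtree has size exceeding $(1/\tau)(\log^{[k]} n)^3$, which makes the total bit size of all candidate sets $C_i(x)$ add up to $o(n)$ while keeping the number of candidates collected from the smallest subtrees asymptotically absorbed by the $O((1/\tau)\log^* n)$ that the query must verify anyway.

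For the tree topology I would use the balanced-parentheses sequence $P[1..2n]$ in $2n+o(n)$ bits, providing constant-time support for every tree-navigation primitive of Section~\ref{sec:treeops} except $\labelanc$~\cite{NS14}. The node labels in preorder go into a single sequence $S[1..n]$ encoded within $nH+o(n)(H+1)$ bits via the representation of \cite{BN15}; this same $S$ simultaneously carries the $O((1/\tau)\log\log_w\sigma)$-time succinct range $\tau$-majority structure of \cite[Thm.~6]{bgmnn2016} needed for the unary-path candidates and, together with $P$, supports the $\labelanc$ structure of \cite[Lem.~7]{DSST16}. None of these layers adds more than $o(n)(H+1)$ further bits.

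The hard part will be representing the $\pathcount$ field within succinct space, since the explicit array used in Sections~\ref{sec:nlogn} and~\ref{sec:linear} would cost $\Theta(n\log n)$ bits and ruin the bound. My plan is to adopt the partial-rank trick sketched in this section: let $S''[1..2n]$ be the virtual sequence placing at position $i$ the label of the node whose opening or closing parenthesis sits at $P[i]$. I would never materialise $S''$, simulating every access in $O(1)$ time via $S''[i]=S[\rank_1(P,i)]$ when $P[i]=1$ and $S''[i]=S[\rank_1(P,\open(P,i))]$ otherwise. On both $S$ and this virtual $S''$ I would install the constant-time partial-rank structure of \cite[Sec.~3]{BN13}, each of size $O(n)+o(nH)$ bits and constructible in $O(n\log n)$ time w.h.p. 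Whenever the query needs the number of occurrences of a label $\ell$ on the path from $P[i]$ to the root, I first use $\labelanc$ to jump to the nearest ancestor labeled $\ell$ so that $P[i]$ itself is labeled $\ell$, and then evaluate $2\cdot\prank(S,\rank_1(P,i))-\prank(S'',i)$, which by the argument in this section equals the number of opening minus closing parentheses of colour $\ell$ in $P[1..i]$, i.e., the desired count.

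The remaining auxiliary fields reduce to bitvectors over sparse sets and therefore cost $o(n)$ bits. I would use a bitvector $M[1..n]$ marking the preorder ranks of the $o(n)$ branching nodes of levels $1,\ldots,\kappa$; a bitvector $B[1..2n]$ parallel to $P$ marking the $o(n)$ positions of subtree roots, together with the induced short parenthesis string $P'$ to locate enclosing subtree roots as in \cite[Sec.~4.1]{RNO11}; and bitvectors $C,D$ delimiting the $C_i(x)$ blocks and their $i$-components inside a single $o(n)$-bit memory area, where each candidate is stored by the depth of its nearest ancestor of that label using differentially encoded $\gamma$-codes as in Section~\ref{sec:linear}. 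Nearest branching ancestor, subtree-level classification, and access to the chosen $C_j(x)$ all follow in constant time from $\access/\rank/\select$ on these sparse bitvectors represented as in \cite{RRR07}. Summing every piece yields $nH+O(n)+o(nH)$ bits. For construction, the range $\tau$-majority structure on $S$ takes $O(n\log n)$ deterministic time, and the $\labelanc$ structure, tree topology and sparse bitvectors take $O(n)$ deterministic time; only the two partial-rank structures need $O(n\log n)$ time w.h.p., which accounts for the w.h.p.\ qualifier in the statement. Correctness and the query time $O((1/\tau)\log^* n\log\log_w\sigma)$ then follow directly from Theorem~\ref{thm:linear}, since every step is simulated within its original asymptotic cost.
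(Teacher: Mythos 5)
Your proposal is correct and follows essentially the same route as the paper: the same rescaled thresholds $(1/\tau)(\log^{[k]} n)^3$ with $\kappa=\log^* n-\log^{**} n$ levels, the same single preorder label sequence $S$ carrying both the slow succinct range-majority structure and the $\labelanc$ support, the same virtual interleaved sequence $S''$ with constant-time partial rank on $S$ and $S''$ to recover $\pathcount$ as $2\cdot\prank(S,\rank_1(P,i))-\prank(S'',i)$, and the same sparse bitvectors for the remaining fields. The space and construction-time accounting, including the w.h.p.\ qualifier coming only from the two partial-rank structures, matches the paper's proof.
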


We note that, within this space, all the typical tree navigation functionality,
as well as access to labels, is supported.

\section{Conclusions}

We have presented the first data structures that can efficiently find the
$\tau$-majorities on the path between any two given nodes in a tree. Our
data structures use linear or near-linear space, and even succinct space,
whereas our query times are close to optimal, by a factor near log-logarithmic.

As mentioned in the Introduction, many applications of these results require that the trees are multi-labeled,
that is, each node holds several labels. We can easily accommodate multi-labeled
trees $T$ in our data structure, by building a new tree $T'$ where each node 
$u$ of $T$ with $m(u)$ labels $\ell_1,\ldots,\ell_{m(u)}$ is replaced by an
upward path of nodes $u_1,\ldots,u_{m(u)}$, each $u_i$ holding the label 
$\ell_i$ and being the only child of $u_{i+1}$ (and $u_{m(u)}$ being a child 
of $v_1$, where $v$ is the parent of 
$u$ in $T$). Path queries from $u$ to $v$ in $T$ are then transformed into path
queries from $u_1$ to $v_1$ in $T'$, except when $u$ ($v$) is an ancestor of 
$v$ ($u$), in which case we replace $u$ ($v$) by $u_{u(m)}$ ($v_{m(v)}$) in 
the query. All our complexities then hold on $T'$, which is of size $n = |T'| =
\sum_{u \in T} m(u)$.

Our query time for path $\tau$-majorities in linear space,
$O((1/\tau)\log^* n \log\log_w \sigma)$, is over the optimal time $O(1/\tau)$ 
that can be obtained for range $\tau$-majorities on sequences \cite{bgmnn2016}.
It is open whether we can obtain optimal time on trees within linear or 
near-linear space. Other interesting research problems are solving 
$\tau'$-majority queries for any $\tau' \ge \tau$ given at query time, in time 
proportional to $1/\tau'$ instead of $1/\tau$, and to support insertions and 
deletions of nodes in $T$. Similar questions can be posed for $\tau$-minorities,
where the $O((1/\tau)\log\log_w \sigma)$ query time of our linear-space solutions is also
over the time $O(1/\tau)$ achievable on sequences~\cite{bgmnn2016}.

\bibliography{paper}

\appendix

\section{Path $\tau$-Minorities}

A path $\tau$-minority query asks for a $\tau$-minority in a given path $P_{u,v}$, i.e., a label that appears at least once and at most $\tau\cdot|P_{uv}|$ times in this path. 
If we try out $A=1+\lfloor 1/\tau \rfloor$ distinct elements in the path from
$u$ to $v$, then one of them will turn out to be a $\tau$-minority. With this 
idea, we extend the technique of Chan~\etal~\cite{cdsw2015} to tree paths. To 
find a $\tau$-minority, we will find $A$ distinct labels (or all the labels,
if there are not that many) in the path $P_{uz}$, where $z=\lca(u,v)$, and check
their frequency in $P_{uv}$. We then run an analogous process on the path
$P_{vz}$. We will stop as soon as we find a label that is not a
$\tau$-majority. We describe the process on $P_{uz}$, as $P_{vz}$ is
analogous.

To find $A$ distinct labels, we will simulate on $P_{uz}$ the algorithm of 
Muthukrishnan~\cite{Mut02}, which finds $A$ distinct elements in any range 
of an array $E$. In his algorithm, Muthukrishnan defines the array $C$ where $C[i] = 
\max \{ j < i,~ E[j] = E[i] \}$ ($C[i]$ is set to $0$ if such a value does not exist) and builds on $C$ a 
range minimum query (RMQ) data structure; a range minimum query asks for the minimum element in a given subrange of the array.
Then he finds $A$ (or all the)
distinct elements in any range $E[i..j]$ via $O(A)$ RMQs.

In our case, we store for each node $u$ the field
$\prevlabel(u) = \nodedepth(\labelanc(\parent(u), \linebreak \nodelabel(u)))$, which is the depth of the 
nearest ancestor of $u$ with its same label (and $-1$ if there is none).
Then we conceptually define $E$ and $C$ over $P_{u,v}$, where $E[i]=\nodelabel(\anc(u,i+\nodedepth(z)-1))$ and
$C[i]=1+\prevlabel(\anc(u,i+\nodedepth(z)-1))$.
Note that we do not store $E$ or $C$ explicitly, but each entry of $E$ or $C$ can be computed in constant time using these formulas. 
To solve RMQs on $C$, we also build the linear-space data structure of Chazelle \cite{Cha87},
which can return the minimum-weight node in any path of a weighted tree in constant time. This data structure is constructed over the tree $T$, for which we assign $\prevlabel(u)$ as the weight of each node $u$. 
With all these structures, 
we can run Muthukrishnan's
algorithm and obtain the $A$ distinct labels of $P_{uz}$.
This yields our first result, which
slightly reduces the $O((1/\tau)\log\log n)$ time (within linear space) of
Durocher~\etal~\cite{DSST16}. Note that the $\prevlabel$ fields are easily
computed in $O(n)$ time in a DFS traversal.

\begin{theorem} \label{thm:linear-min}
Let $T$ be a tree of $n$ nodes with labels in $[1..\sigma]$, and $0<\tau<1$.
On a RAM machine of $w$-bit words, we can build an $O(n)$ space data
structure that answers path $\tau$-minority queries in time
$O((1/\tau)\log\log_w\sigma)$. The structure is built in linear time.
\end{theorem}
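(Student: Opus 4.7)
The plan is to formalize exactly the pipeline already sketched in the preceding paragraphs, and to verify both the linear space bound and the claimed query time. The starting observation is the pigeonhole argument: among any $A = 1 + \lfloor 1/\tau \rfloor$ distinct labels appearing in $P_{uv}$, at least one must occur at most $\tau \cdot |P_{uv}|$ times, so producing $A$ distinct labels from the path (or all of them, if fewer exist) and then verifying each in turn is guaranteed to yield a $\tau$-minority or to certify that none exists.

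First I would split the path at $z = \lca(u,v)$, and focus on $P_{uz}$ (the treatment of $P_{vz}$ is symmetric; we stop as soon as any tested candidate turns out to be a $\tau$-minority in $P_{uv}$). To enumerate $A$ distinct labels along $P_{uz}$, the key step is to emulate Muthukrishnan's algorithm~\cite{Mut02} on the \emph{conceptual} arrays $E[1..|P_{uz}|]$ and $C[1..|P_{uz}|]$ defined by $E[i] = \nodelabel(\anc(u,i+\nodedepth(z)-1))$ and $C[i] = 1+\prevlabel(\anc(u,\cdot))$, without materializing them. For this I would precompute, in a single DFS, the field $\prevlabel(u) = \nodedepth(\labelanc(\parent(u),\nodelabel(u)))$ at every node, so that any requested entry of $E$ or $C$ is produced in $O(1)$ time using the level-ancestor structure~\cite{BF04}. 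The RMQs that Muthukrishnan's algorithm performs on $C$ then translate into minimum-weight node queries on the subpath of $T$ between two nodes, weighted by $\prevlabel$, which are answered in $O(1)$ time by Chazelle's linear-space structure~\cite{Cha87}. This yields the $A$ distinct labels of $P_{uz}$ using $O(A) = O(1/\tau)$ RMQs, i.e., $O(1/\tau)$ time in total.

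Next, for each of the $O(1/\tau)$ candidate labels produced across both halves, I would verify its frequency in $P_{uv}$ using exactly the machinery of Section~\ref{sec:nlogn}: the per-node $\pathcount$ field together with two $\labelanc$ queries per endpoint yields the exact count of $\ell$ in $P_{uv}$ in $O(\log\log_w\sigma)$ time~\cite{HMZ14,Tsu14,DSST16}. Summed over $O(1/\tau)$ candidates, the verification cost is $O((1/\tau)\log\log_w\sigma)$, which dominates the $O(1/\tau)$ cost of producing candidates and gives the stated query time. The space bound is straightforward: the $\prevlabel$ and $\pathcount$ fields take $O(n)$ words, the $\labelanc$ structure is linear, Chazelle's path-RMQ structure is linear, and the level-ancestor and $\lca$ structures are linear; all of them are built in linear time (the $\prevlabel$ values in particular during the DFS that computes $\labelanc$).

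The main technical point to check carefully is that Muthukrishnan's algorithm really does go through unchanged on the virtual arrays $E,C$: one must confirm that each $C$-entry used by the RMQ recursion is correctly reproduced by the tree-weight interpretation (i.e., that $\prevlabel+1$ is the right surrogate for ``previous occurrence position'' once positions are indexed by depth along $P_{uz}$), and that the minimum returned by the path-RMQ corresponds exactly to the conceptual array-RMQ. This is where a small but careful argument is needed, since the ``position'' of a node along $P_{uz}$ is its depth offset from $u$, and $\prevlabel$ stores depths, so the mapping is consistent; I would also note the edge case $\prevlabel(u)=-1$, which under the $+1$ shift encodes the absence of an earlier occurrence as the value $0$, exactly as in Muthukrishnan's convention.
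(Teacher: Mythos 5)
Your proposal follows essentially the same route as the paper's own argument: the pigeonhole bound with $A=1+\lfloor 1/\tau\rfloor$ candidates, splitting at $z=\lca(u,v)$, simulating Muthukrishnan's algorithm on the virtual arrays $E$ and $C$ via the $\prevlabel$ field and Chazelle's constant-time path-minimum structure, and verifying each candidate with $\labelanc$ and $\pathcount$ in $O(\log\log_w\sigma)$ time. Your additional care about the depth-versus-position indexing and the $\prevlabel=-1$ convention is a correct consistency check that the paper leaves implicit, but it does not change the approach.
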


It is likely that the result of Durocher~\etal~\cite{DSST16} can be improved
to match ours, by just using a faster predecessor data structure. We can,
however, make our solution succinct by using our tree representation of
$2n+o(n)$ bits \cite{NS14}. Instead of storing field $\prevlabel$, we compute
it on the fly with the given formula. Using the structures of
Durocher~\etal~\cite[Lem.~7]{DSST16}, we can compute $\labelanc$ in 
time $O(\log\log_w\sigma)$.
Their structure uses $2n+o(n)$ bits
in addition to the topology of $T$ and the representation of $S$.

The structure for RMQs, on the other hand, can be replaced by the one of 
Chan~\etal~\cite{CHMZ17}, which uses $2n+o(n)$ further bits and answers RMQs
with $O(\alpha(n))$ queries $\prevlabel(u)$, where $\alpha$ is the inverse 
Ackermann function. Therefore, we can spot the $A$ candidates in time 
$O(A \cdot \alpha(n) \log\log_w\sigma)$ and then verify them in time 
$O(A \cdot \log\log_w\sigma)$. This yields the first result for path
$\alpha$-minority queries within succinct space.

%

\begin{theorem} \label{thm:succinct2-min}
Let $T$ be a tree of $n$ nodes with labels in $[1..\sigma]$, and $0<\tau<1$.
On a RAM machine of $w$-bit words, we can build in $O(n)$ time a 
data structure using $nH + 6n + o(n)(H+1)$ bits, where $H \le \lg\sigma$ is the 
entropy of the distribution of the node labels,
that answers path $\tau$-minority queries in time
$O((1/\tau)\alpha(n) \log\log_w\sigma)$, where $\alpha$ is the inverse
Ackermann function.
\end{theorem}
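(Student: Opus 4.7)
The plan is to assemble four succinct components so that the minority algorithm underlying Theorem~\ref{thm:linear-min} runs without storing any explicit $\prevlabel$ field or $O(n\log n)$-bit auxiliary data. I would begin with the balanced-parentheses topology of Navarro and Sadakane~\cite{NS14}, costing $2n+o(n)$ bits and supporting all the standard navigation operations ($\nodedepth$, $\parent$, $\anc$, $\lca$, $\open$, $\close$, $\enclose$) in constant time. On top of it I install the preorder label sequence $S[1..n]$ in Barbay--Navarro form~\cite{BN15}, which spends $nH+o(n)(H+1)$ bits and provides $O(1)$ $\access$ and $O(\log\log_w\sigma)$ $\rank$/$\select$. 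I then add the $\labelanc$ structure of Durocher~\etal~\cite[Lem.~7]{DSST16}, which uses a further $2n+o(n)$ bits on top of the topology and $S$ and answers in $O(\log\log_w\sigma)$, and finally the succinct path-RMQ oracle of Chan~\etal~\cite{CHMZ17}, at a cost of another $2n+o(n)$ bits. Summing the contributions yields exactly $nH+6n+o(n)(H+1)$ bits, and every component is buildable in linear time.

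The query itself mirrors the proof of Theorem~\ref{thm:linear-min}. For a query $(u,v)$ I compute $z=\lca(u,v)$ and run Muthukrishnan's distinct-element procedure twice, on $P_{uz}$ and $P_{zv}$, to collect up to $A=1+\lfloor 1/\tau\rfloor$ candidate labels per side via $O(A)$ path-RMQs. Crucially, I do not store $\prevlabel$: whenever Chan~\etal's structure asks for the weight of a node $x$, I evaluate on the fly
\[
\prevlabel(x)=\nodedepth(\labelanc(\parent(x),\nodelabel(x)))
\]
in $O(\log\log_w\sigma)$ time, and since each path-RMQ triggers $O(\alpha(n))$ such probes, the whole candidate-collection phase costs $O((1/\tau)\alpha(n)\log\log_w\sigma)$. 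Each candidate is then verified by counting its occurrences in $P_{uv}$ through the $\labelanc$-based pathcount identity of Section~5.1 (splitting at $z$ and reading $o-c$ from partial-rank queries on $S$ and the interleaved sequence $S''$), at $O(\log\log_w\sigma)$ per candidate, giving overall query time $O((1/\tau)\alpha(n)\log\log_w\sigma)$ as claimed.

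The main obstacle I expect is precisely the verification step: Chan~\etal's structure cleanly delivers the $A$ candidates, but counting occurrences inside $P_{uv}$ succinctly and deterministically is subtle. Using partial rank on $S$ and $S''$ takes only $O(n)+o(nH)$ bits, which easily fits within the $6n+o(n)(H+1)$ slack, but its construction is randomized $O(n)$; to preserve deterministic $O(n)$ construction one instead falls back to general $\rank$ queries on $S$ (which still cost $O(\log\log_w\sigma)$ and therefore do not inflate the query bound) and avoids storing a second copy of the labels by interleaving accesses with $\open$/$\close$ on the parentheses. A secondary obstacle is verifying that Chan~\etal's succinct path-RMQ really can be driven by an external weight oracle rather than a materialized array, and that its $O(\alpha(n))$ probe bound carries over to queries along arbitrary root-to-node paths in $T$; once this is checked, the space, time, and construction accounts close as stated.
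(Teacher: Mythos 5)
Your construction is essentially the paper's own proof: the same four components (balanced-parentheses topology, the sequence $S$ of \cite{BN15}, the $2n+o(n)$-bit $\labelanc$ structure of Durocher~\etal, and the $2n+o(n)$-bit path-RMQ structure of Chan~\etal\ driven by on-the-fly evaluation of $\prevlabel$), the same space accounting summing to $nH+6n+o(n)(H+1)$ bits, and the same query algorithm collecting $A$ candidates per side via $O(A)$ RMQs at $O(\alpha(n)\log\log_w\sigma)$ each. On these points the proposal matches the paper and is correct.

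Where you go beyond the paper is the verification step, and there your proposed deterministic fallback does not work as stated. You suggest replacing the partial-rank counting of Section~5.1 by general $\rank$ queries while ``avoiding a second copy of the labels by interleaving accesses with $\open$/$\close$.'' But the counting identity needs $\rank_\ell$ on the \emph{postorder} (or interleaved) label sequence, and that sequence is only available virtually: the simulation $S''[i]=S[\rank_1(P,\open(P,i))]$ provides $O(1)$ $\access$, which is enough to support \emph{partial} rank \cite[Sec.~3]{BN13} but not general $\rank_\ell$ at arbitrary positions --- general rank requires its own wavelet-tree-style index over $S''$, i.e., another $\Theta(nH)$ bits. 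So your two options are really: (a) partial rank on $S$ and the virtual $S''$, which fits the space bound but reintroduces randomized construction, contradicting the theorem's deterministic $O(n)$ claim; or (b) storing the postorder sequence $S'$, which costs an extra $nH$ bits. To your credit, you have put your finger on a step the paper itself leaves implicit (it merely asserts the candidates can be verified in $O(A\log\log_w\sigma)$ time after noting in Section~5.1 that Durocher~\etal\ ``find no succinct way to represent'' $\pathcount$); but the specific repair you propose does not close it, so the verification mechanism within the stated space and deterministic construction time remains to be supplied.
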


\end{document}